\newtheorem{theorem}{Theorem}
\newtheorem{lemma}[theorem]{Lemma}
\newtheorem{fact}[theorem]{Fact}
\theoremstyle{remark}
\newtheorem{remark}[theorem]{Remark}
\newtheorem{example}[theorem]{Example}
\newcommand{\lcp}{\mathsf{lcp}}
\newcommand{\Oh}{\mathcal{O}}
\newcommand{\dd}{\mathinner{.\,.}}
\newcommand{\nat}{\mathbb{N}}
\newcommand{\integ}{\mathbb{Z}}
\newcommand{\dist}{\mbox{\sf ED}}
\newcommand{\ed}{\mbox{\sf ed-dist}}
\newcommand{\defproblem}[3]{
  \par\medskip
  \noindent
  \fbox{
    \begin{minipage}{0.96\textwidth}
    {#1}

    \smallskip
    \noindent
    \textbf{Input:} #2

    \smallskip
    \noindent
    \textbf{Output:} #3
    \end{minipage}
  }
  \smallskip
}
\date{\vspace{-1cm}}
\begin{document}
\title{
 Faster Recovery of Approximate Periods over Edit Distance}
%
%

  \author{Tomasz Kociumaka}
  \author{Jakub Radoszewski\thanks{Supported by the ``Algorithms for text processing with errors and uncertainties''
 project carried out within the HOMING programme of the Foundation for Polish Science co-financed by the European Union under the European Regional Development Fund.}}
  \author{Wojciech Rytter\thanks{Supported by the Polish National Science Center, grant no.\ 2014/13/B/ST6/00770.}}
  \author{Juliusz Straszyński\protect\footnotemark[1]}
  \author{Tomasz~Waleń}
  \author{Wiktor Zuba}
%
%
  \affil{\normalsize Faculty~of Mathematics, Informatics and Mechanics,
    University of Warsaw, Warsaw, Poland\\
    \texttt{[kociumaka,jrad,rytter,jks,walen,w.zuba]@mimuw.edu.pl}}
%
%
\maketitle              

\begin{abstract}
The approximate period recovery problem asks to compute all \emph{approximate word-periods} of a given word $S$ of length $n$:
all primitive words $P$ ($|P|=p$) which have a periodic extension at edit
distance smaller than $\tau_p$ from $S$, where $\tau_p = \lfloor \frac{n}{(3.75+\epsilon)\cdot p} \rfloor$ for some $\epsilon>0$.
Here, the set of periodic extensions of $P$ consists of all finite prefixes of~$P^\infty$.

We improve the time complexity of the fastest known algorithm for this problem of Amir et al. [Theor. Comput. Sci., 2018] from $\Oh(n^{4/3})$ to $\Oh(n \log n)$. 
Our tool is a fast algorithm for Approximate Pattern Matching in Periodic Text. 
We consider only verification for the period recovery problem when the candidate approximate word-period $P$ is explicitly given up to cyclic rotation;
the algorithm of Amir et al. reduces the general problem in $\Oh(n)$ time to a logarithmic number of such more specific instances.

\end{abstract}

\section{Introduction}
The aim of this work is computing periods of words in the approximate pattern matching model (see e.g.\ \cite{Jewels,DBLP:books/cu/Gusfield1997}).
This task can be stated as the \emph{approximate period recovery (APR) problem} that was defined by Amir et al.~\cite{DBLP:journals/talg/AmirELPS12}.
In this problem, we are given a word; we suspect that it was initially periodic, but then errors might have been introduced in it.
Our goal is to attempt to recover the periodicity of the original word.
If too many errors have been introduced, it might be impossible to recover the period.
Hence, a requirement is imposed that the distance between the original periodic word and the word with errors is upper bounded, with the bound being related to the period length.
Here, edit distance is used as a metric.
The fastest known solution to the APR problem is due to Amir et al.~\cite{DBLP:journals/tcs/AmirALS18}.

A different version of the APR problem was considered by Sim et al.~\cite{DBLP:journals/tcs/SimIPS01}, who bound the number of errors per occurrence of the period.
The general problem of computing approximate periods over weighted edit distance is known to be NP-complete;
see~\cite{DBLP:journals/tcs/Popov03,DBLP:journals/tcs/SimIPS01}.
Other variants of approximate periods have also been introduced.
One direction is the study of approximate repetitions, that is, subwords of the given word that are
approximately periodic in some sense (and, possibly, maximal); see~\cite{DBLP:journals/tcs/AmitCLS17,DBLP:journals/tcs/KolpakovK03,DBLP:journals/bioinformatics/SokolBT07,DBLP:journals/tcs/SokolT14}.
Another is the study of quasiperiods, occurrences of which may overlap in the text;
see, e.g.,~\cite{DBLP:journals/tcs/ApostolicoE93,DBLP:journals/ipl/ApostolicoFI91,DBLP:journals/ipl/Breslauer92,DBLP:conf/soda/KociumakaKRRW12,DBLP:journals/algorithmica/LiS02}.

Let $\ed(S,W)$ be the edit distance (or Levenshtein distance) between the words $S$ and $W$, that is, the minimum number of edit operations
(insertions, deletions, or substitutions) necessary to transform $S$ to $W$.
A word $P$ is called \emph{primitive} if it cannot be expressed as $P=Q^k$ for a word $Q$ and an integer $k \ge 2$.
The APR problem can now formally be defined as follows.

\defproblem{Approximate Period Recovery (APR) Problem}{
  A word $S$ of length $n$
}{
  All primitive words $P$ (called \emph{approximate word-periods}) for which the infinite word $P^\infty$ has a prefix $W$ such that $\ed(S,W) < \tau_p$, where
  $p=|P|$ and $\tau_p = \lfloor \frac{n}{(3.75+\epsilon)\cdot p} \rfloor$ with $\epsilon > 0$
}

\begin{remark}
  Amir et al.~\cite{DBLP:journals/tcs/AmirALS18} show that each approximate word-period is a subword of $S$ and thus can be represented in constant space.
  Moreover, they show that the number of approximate word-periods is $\Oh(n)$.
  Hence, the output to the APR problem uses $\Oh(n)$ space.
\end{remark}
The solution of Amir et al.~\cite{DBLP:journals/tcs/AmirALS18} works in $\Oh(n^{4/3})$ time\footnote{Also the APR problem under the Hamming distance
was considered~\cite{DBLP:journals/talg/AmirELPS12} for which an $\Oh(n \log n)$-time algorithm was presented~\cite{DBLP:journals/tcs/AmirALS18}
for the threshold $\lfloor \frac{n}{(2+\epsilon)\cdot p} \rfloor$ with $\epsilon>0$.}.
Our result is an $\Oh(n \log n)$-time algorithm for the APR problem.

Let us recall that two words $U$ and $V$ are \emph{cyclic shifts} (denoted as $U \approx V$) if there exist words $X$ and $Y$ such that $U=XY$ and $V=YX$.
The algorithm of Amir et al.~\cite{DBLP:journals/tcs/AmirALS18} consists of two steps.
First, a small number of candidates are identified, as stated in the following fact.

\begin{fact}[Amir et al.~{\cite[Section 4.3]{DBLP:journals/tcs/AmirALS18}}]\label{fct:red}
  In $\Oh(n)$ time, one can find $\Oh(\log n)$ subwords of~$S$ (of exponentially increasing lengths) such that every approximate word-period of~$S$
  is a cyclic shift of one of the candidates.
\end{fact}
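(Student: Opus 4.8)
My plan is to collapse the (a priori up to $\Theta(n)$) approximate word-periods into $\Oh(\log n)$ cyclic classes by proving two structural lemmas about approximate periodicity under edit distance, and then to extract one explicit subword of $S$ per class in $\Oh(n)$ total time. To begin, note that $\ed(S,W)<\tau_p$ can hold only when $\tau_p\ge 1$, so every candidate length satisfies $p\le \tfrac{n}{3.75+\epsilon}$; these lengths fall into $\Oh(\log n)$ dyadic scales $[2^j,2^{j+1})$. Moreover, each approximate word-period occurs \emph{exactly} as a factor of $S$: fixing an optimal alignment of $S$ to a prefix $W$ of $P^\infty$, the word $W$ consists of about $n/p$ consecutive copies of $P$, while the fewer than $\tau_p$ edits damage only a constant number of copies each; hence at least $\tfrac{n}{p}-\Oh(\tau_p)>0$ copies are matched verbatim and spell out $P$ inside $S$. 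This guarantees that the representatives I eventually output can be taken to be subwords of $S$, as required.

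The first lemma is \emph{uniqueness per length}: any two approximate word-periods $P_1,P_2$ with $|P_1|=|P_2|=p$ satisfy $P_1\approx P_2$. Indeed, the triangle inequality gives $\ed(W_1,W_2)<2\tau_p$ for the corresponding prefixes $W_1,W_2$ of $P_1^\infty,P_2^\infty$, both of length about $n$. On the other hand, two prefixes of length about $n$ of period-$p$ infinite words whose primitive roots are \emph{not} cyclic shifts of one another are provably at edit distance larger than $2\tau_p$; comparing these two bounds forces $P_1\approx P_2$. Thus one representative per admissible length suffices.

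The second lemma is \emph{geometric separation of lengths}: if $p<q$ are both admissible, then $q\ge 2p$. Suppose instead $q<2p$. Then $g=\gcd(p,q)$ divides $p$ and $g\le q-p<p$, so in fact $g\le p/2$ and $\tau_g$ is about $2\tau_p$. An edit-distance analogue of the Fine--Wilf theorem shows that a word lying within $\tau_p$ of a $p$-periodic word and within $\tau_q$ of a $q$-periodic word over a region of length about $n$ lies within roughly $\tau_p+\tau_q\le\tau_g$ of a $g$-periodic word; applied to $S$, this makes $S$ close to a $g$-periodic word. But then the primitive word $P$ of length $p$ satisfies: $P^\infty$ is close to $S$, which is close to a $g$-periodic word with $g\mid p$ and $g<p$, so $P^\infty$ is within $\Oh(\tau_p)$ of a periodic word whose primitive root (of length at most $g<p$) differs from $P$. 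This again contradicts the separation bound of the first lemma, now using primitivity of $P$. Hence $q\ge 2p$, the admissible lengths are geometrically separated, there are $\Oh(\log n)$ of them, and by the first lemma each contributes a single cyclic class---yielding $\Oh(\log n)$ candidate subwords of exponentially increasing lengths.

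It remains to extract the representatives in $\Oh(n)$ time. For each of the $\Oh(\log n)$ scales I would locate a clean copy of the (unknown) period and read off its primitive root, using constant-time longest-common-extension queries over a suffix structure of $S$ to test and align candidate periodic factors; the geometric spacing of the scales keeps the total work linear. I expect the genuine difficulty to lie not here but in the two periodicity lemmas: both the edit-distance Fine--Wilf statement and the edit-distance separation bound between periodic words with distinct primitive roots are considerably subtler than their Hamming-distance counterparts, because insertions and deletions let alignments drift by one position per block. Controlling this drift is exactly what forces the threshold constant to be $3.75$ rather than the value $2$ that suffices under Hamming distance, and getting the accumulated error in the second lemma to stay below $\tau_g$ is the step I would expect to demand the most care.
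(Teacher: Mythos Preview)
The paper does not prove this fact at all: it is stated as a citation to Amir et al.\ \cite[Section~4.3]{DBLP:journals/tcs/AmirALS18} and used as a black box, so there is no proof in the paper to compare your attempt against.

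That said, your outline is broadly faithful to the structure of the argument in the cited work: one shows (i) that for each length there is at most one cyclic class of approximate word-periods, and (ii) that the admissible lengths are geometrically separated, which together give $\Oh(\log n)$ candidates. Your own assessment of where the difficulty lies is accurate: the two ingredients you invoke---an edit-distance separation bound between periodic words with inequivalent primitive roots, and an edit-distance analogue of Fine--Wilf---are genuine theorems that require real work, and the constant $3.75$ arises precisely from making those bounds go through. As written, both are asserted rather than proved, so your proposal is a correct plan but not a proof.

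One place where the sketch is vaguer than it should be is the extraction step. You say you will ``locate a clean copy of the (unknown) period and read off its primitive root,'' but you cannot search for a clean copy of something you do not yet know. The actual mechanism is deterministic and oblivious: for each dyadic scale one fixes in advance a specific position in $S$ (far enough from both ends that the few edits cannot all fall between it and an endpoint) and takes the length-$p$ factor starting there; the counting argument then guarantees this particular factor is an exact cyclic shift of any approximate word-period of that length. This is what makes the $\Oh(n)$ bound immediate, and it is worth stating explicitly rather than appealing to LCE queries to ``test and align'' unspecified candidates.
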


For a pattern $S$ and an infinite word $W$, by $\dist(S,W)$ let us denote the minimum edit distance between $S$ and a prefix of $W$.
By \cref{fct:red}, the APR problem reduces to $\Oh(\log n)$ instances of the following problem.

\defproblem{Approximate Pattern Matching in Periodic Text (APM Problem)}{
  A word $S$ of length $n$, a word $P$ of length $p$, and a threshold $k$
}{
  For every cyclic shift $U$ of $P$, compute $\dist(S,U^\infty)$ or report that this value is greater than $k$
}

Amir et al.~\cite{DBLP:journals/tcs/AmirALS18} use two solutions to the APM problem that work in $\Oh(np)$ time and $\Oh(n+k(k+p))$ time,
respectively. 
The main tool of the first algorithm is \emph{wrap-around dynamic programming}~\cite{DBLP:journals/ipl/FischettiLSS93}
that solves the APM problem without the threshold constraint $k$ in $\Oh(np)$ time.
The other solution is based on the Landau--Vishkin algorithm~\cite{DBLP:journals/jal/LandauV89}.
For each $p$ and $k < \tau_p$, either algorithm works in $\Oh(n^{4/3})$ time.

\paragraph{Our results.}
We show that:
\begin{itemize}
  \item The APM problem can be solved in $\Oh(n+kp)$ time.
  \item The APR problem can be solved in $\Oh(n \log n)$ time.
\end{itemize}
Our solution to the APM problem involves a more efficient combination of wrap-around dynamic programming with the Landau--Vishkin algorithm.

\section{Approximate Pattern Matching in Periodic Texts}

We assume that the length of a word $U$ is denoted by $|U|$ and the letters of $U$ are numbered $0$ through $|U|-1$,
with $U[i]$ representing the $i$th letter.
By $U[i \dd j]$ we denote the subword $U[i] \cdots U[j]$; if $i>j$, it denotes the empty word.
A prefix of $U$ is a subword $U[0 \dd i]$ and a suffix of $U$ is a subword $U[i \dd |U|-1]$, denoted also as $U[i \dd]$.

The length of the longest common prefix of words $U$ and $V$ is denoted by $\lcp(U,V)$.
The following fact specifies a well-known efficient data structure answering such queries over suffixes of a given text;
see, e.g., \cite{AlgorithmsOnStrings}.

\begin{fact}\label{fct:ver}
Let $S$ be a word of length $n$ over an integer alphabet of size $\sigma = n^{\Oh(1)}$.
After $\Oh(n)$-time preprocessing, given indices $i$ and $j$ ($0 \le i,j < n$) one can compute $\lcp(S[i \dd],S[j \dd])$ in $\Oh(1)$ time.
\end{fact}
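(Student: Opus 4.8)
The plan is to reduce longest-common-prefix queries between arbitrary suffixes to range-minimum queries over a static array, and then to assemble three classical linear-time building blocks. First I would build the \emph{suffix array} $\mathrm{SA}$ of $S$, i.e., the permutation listing the starting positions of the suffixes of $S$ in increasing lexicographic order. Over an integer alphabet of size $n^{\Oh(1)}$ this takes $\Oh(n)$ time (for instance via the skew/DC3 algorithm or SA-IS). From $\mathrm{SA}$ I compute in $\Oh(n)$ time its inverse $\mathrm{rank}$, where $\mathrm{rank}[i]$ is the lexicographic position of the suffix $S[i\dd]$.

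Second, I would compute the \emph{LCP array} $H$, where $H[r]=\lcp(S[\mathrm{SA}[r-1]\dd],\,S[\mathrm{SA}[r]\dd])$ records the longest common prefix of two lexicographically adjacent suffixes. This is done in $\Oh(n)$ time by Kasai et al.'s algorithm, which scans the suffixes in text order and uses the fact that consecutive LCP values (in text order) decrease by at most one.

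The key structural observation is the monotonicity of common prefixes along the sorted order: for indices $i,j$ with $\mathrm{rank}[i]<\mathrm{rank}[j]$ one has
\[
  \lcp(S[i\dd],S[j\dd]) \;=\; \min_{\mathrm{rank}[i] < r \le \mathrm{rank}[j]} H[r].
\]
Walking from the suffix of rank $\mathrm{rank}[i]$ to that of rank $\mathrm{rank}[j]$ through the intervening suffixes, the shared prefix can only shrink, and it shrinks exactly to the minimum of the consecutive pairwise LCPs. I would therefore preprocess $H$ for range-minimum queries using the Bender--Farach-Colton structure, which answers an RMQ in $\Oh(1)$ time after $\Oh(n)$ preprocessing. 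A query is then immediate: given $i,j$, return $n-i$ if $i=j$; otherwise set $a=\min(\mathrm{rank}[i],\mathrm{rank}[j])$ and $b=\max(\mathrm{rank}[i],\mathrm{rank}[j])$ and return $\min_{a<r\le b}H[r]$ via a single RMQ. All lookups take constant time, giving the claimed bounds.

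The only nontrivial ingredients are the two linear-time primitives---suffix-array construction over an integer alphabet and constant-time range-minimum queries---so the main obstacle is not algorithmic novelty but invoking these results with the right alphabet assumption and verifying the range-minimum characterization of $\lcp$ displayed above; the remaining bookkeeping (inverse permutation, boundary case $i=j$, empty query range) is routine.
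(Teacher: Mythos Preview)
Your construction is correct and is precisely the standard suffix-array\,+\,LCP-array\,+\,RMQ pipeline that the cited textbook presents. The paper itself does not prove this fact at all; it simply states it and refers to the literature, so there is no ``paper's own proof'' to compare against---your proposal supplies exactly the argument that the reference would give.
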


\subsection{Wrap-Around Dynamic Programming}\label{sec:wrap}

Following~\cite{DBLP:journals/ipl/FischettiLSS93}, we introduce a table $T[0\dd n, 0\dd p-1]$
whose cell $T[i,j]$  denotes the minimum edit distance between $S[0\dd i-1]$ and some 
subword of the periodic word $P^\infty$ ending on the $(j-1)$th character of the period.
More formally, for $i \in \{0,\ldots,n\}$ and $j \in \integ_p$, we define
\[T[i,j]=\min \{ \ed(S[0\dd i-1],P^\infty[i'\dd j'])\,:\,i' \in \nat,\ j'\equiv j-1\ (\bmod\,p)\};\]
see \cref{fig:sroda}. The following fact characterizes $T$ in terms of $\dist$.

\begin{fact}\label{fct:obs}
We have $\min\{\dist(S,U^\infty) : U \approx P\} = \min\{T[n,j] : j \in \integ_p\}$.
\end{fact}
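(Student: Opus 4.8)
The plan is to observe that both sides of the claimed identity equal a single quantity: the minimum edit distance between $S$ and an arbitrary factor of the periodic word $P^\infty$. First I would unfold the right-hand side. Since taking the minimum over all $j \in \integ_p$ removes the congruence restriction on the endpoint, the residue $j-1$ ranges over all of $\integ_p$, so $j'$ ranges over all of $\nat$; hence
\[
\min\{T[n,j] : j \in \integ_p\} = \min\{\ed(S, P^\infty[i'\dd j']) : i',j' \in \nat\},
\]
that is, the minimum of $\ed(S, \cdot)$ taken over all factors of $P^\infty$ (the empty factor included, arising when $i' > j'$).

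Next I would show that the collection of factors of $P^\infty$ coincides, as a set of words, with the collection of prefixes of $U^\infty$ over all cyclic shifts $U \approx P$; this is the only substantive step. In one direction, given a factor $P^\infty[i'\dd j']$, I set $r = i' \bmod p$ and take $U = P[r\dd p-1]\,P[0\dd r-1]$, the rotation of $P$ starting at position $r$; then $U^\infty = P^\infty[i'\dd]$, so the factor equals $U^\infty[0\dd j'-i']$, a prefix of $U^\infty$. In the other direction, writing a cyclic shift as $U = YX$ with $P = XY$ and $|X| = r$, one has $U^\infty = P^\infty[r\dd]$, so every prefix of $U^\infty$ of length $\ell$ is the factor $P^\infty[r\dd r+\ell-1]$ of $P^\infty$.

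Finally, because $\ed(S, \cdot)$ depends only on its second argument as a word, the equality of the two word-collections transfers directly to equality of the two minima, and combining this with the unfolding above yields
\[
\min\{\dist(S,U^\infty) : U \approx P\} = \min\{\ed(S,W) : W \text{ is a prefix of } U^\infty,\ U \approx P\} = \min\{T[n,j] : j \in \integ_p\}.
\]
I expect the only point needing care to be the bookkeeping in the correspondence above: matching the free starting index $i'$ in the definition of $T$ against the choice of cyclic rotation, and confirming that both directions cover exactly the same words (in particular the empty-word boundary case), rather than any genuinely hard argument.
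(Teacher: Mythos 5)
Your proof is correct and follows essentially the same route as the paper: both unfold $\min\{T[n,j] : j \in \integ_p\}$ into the minimum of $\ed(S,\cdot)$ over all subwords of $P^\infty$ and then identify that set of subwords with the set of prefixes of $U^\infty$ over all $U \approx P$. The only difference is that you spell out the rotation correspondence explicitly, which the paper leaves as an assertion.
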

\begin{proof}
First, let us observe that the definition of $T$ immediately yields
\[\min\{T[n,j] : j \in \mathbb{Z}_p\}= \min\{\ed(S, P^\infty[i'\dd j'])\,:\, i',j'\in \nat\}.\]
In other words, $\min\{T[n,j] : j \in \mathbb{Z}_p\}$ is the minimum edit distance
between $S$ and any subword of $P^\infty$. 
On the other hand, $\min\{\dist(S,U^\infty) : U \approx P\}$ by definition of $\dist$
is the minimum edit distance between $S$ and a prefix of $U^\infty$ for a cyclic shift $U$ of $P$.
Finally, it suffices to note that the sets of subwords of $P^\infty$ and of prefixes of $U^\infty$ taken over all $U \approx P$ are the same.
%
\end{proof}

Below, we use $\oplus$ and $\ominus$ to denote operations in $\mathbb{Z}_p$.

\begin{lemma}[\cite{DBLP:journals/ipl/FischettiLSS93}]\label{lem:Fis}
The table $T$ is the unique table satisfying the following formula:
\begin{align*}
T[0,j]& =0,\\
T[i+1,j\oplus 1]&=\min\left\{
\begin{matrix}
T[i,j\oplus 1]&+&1 \\
T[i,j]&+& [S[i] \neq P[j]] \\
T[i+1,j]&+&1 
\end{matrix}\right\}.
\end{align*}
\end{lemma}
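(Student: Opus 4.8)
The plan is to prove this by the standard dynamic-programming verification argument: establish that the quantity $T[i,j]$ defined combinatorially satisfies the recurrence, and that the recurrence has a unique solution, so the two tables coincide.

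\medskip

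First I would verify the base case. By definition, $T[0,j]$ is the minimum of $\ed(S[0\dd -1], P^\infty[i'\dd j'])$ over admissible $i',j'$; since $S[0\dd -1]$ is the empty word, we may take $P^\infty[i'\dd j']$ empty as well (choosing $j' \equiv j-1 \pmod p$ with the interval empty), giving edit distance $0$. Hence $T[0,j]=0$, matching the formula.

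\medskip

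Next I would establish the inductive recurrence. Fix $i$ and $j$ and consider an optimal alignment realizing $T[i+1,j\oplus 1]$, i.e.\ a subword $P^\infty[i'\dd j']$ with $j' \equiv j \pmod p$ achieving $\ed(S[0\dd i], P^\infty[i'\dd j'])$. I would look at the last edit operation applied to the final characters $S[i]$ and $P[j]$ (the last character of the period-ending block, at position $\equiv j \pmod p$):
\begin{itemize}
\item If $S[i]$ is deleted, the cost is $1$ plus the cost of aligning $S[0\dd i-1]$ with $P^\infty[i'\dd j']$, yielding $T[i,j\oplus 1]+1$.
\item If $S[i]$ is matched or substituted against $P[j]$, the cost is $[S[i]\neq P[j]]$ plus the cost of aligning $S[0\dd i-1]$ with $P^\infty[i'\dd j'-1]$, whose endpoint satisfies $j'-1 \equiv j-1 \pmod p$, yielding $T[i,j]+[S[i]\neq P[j]]$.
\item If $P[j]$ is deleted (equivalently, a character of $P^\infty$ is inserted into $S$), the cost is $1$ plus the cost of aligning $S[0\dd i]$ with $P^\infty[i'\dd j'-1]$, yielding $T[i+1,j]+1$.
\end{itemize}
Taking the minimum over these mutually exhaustive cases gives the ``$\le$'' direction, and conversely any value produced by the right-hand side corresponds to a concrete alignment, giving ``$\ge$''. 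This is the standard case analysis for edit distance, and the only subtlety is bookkeeping the period index modulo $p$ under the $\ominus 1$ shift, which the $\oplus,\ominus$ notation handles cleanly.

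\medskip

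Finally I would argue uniqueness. The recurrence is a well-founded definition: $T[i+1,\cdot]$ is determined by row $i$ together with entries $T[i+1,j]$ that are resolved by processing the indices $j\in\integ_p$ in an order that respects the wrap-around insertion dependency. The main obstacle I expect is precisely this point — because the third term $T[i+1,j]$ refers to another cell in the same row $i+1$, the dependency graph within a row is cyclic around $\integ_p$, so naive induction on $j$ does not close. I would resolve it by observing that for each fixed $i$ the system of equations in the unknowns $T[i+1,\cdot]$ is a shortest-path (min-plus) fixed point whose underlying graph has no negative cycles, hence admits a unique solution; equivalently, one shows the combinatorially-defined $T$ is the least fixed point and that the recurrence is contracting once each row is ``saturated'' by at most one full sweep around the cycle. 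Establishing well-definedness of this cyclic row recurrence is the technical heart; the rest is routine edit-distance bookkeeping.
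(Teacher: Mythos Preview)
The paper does not actually prove this lemma; it is cited from Fischetti et al.\ and only accompanied by a brief graph-theoretic interpretation (vertices $(i,j)$, arcs with weights as in the recurrence, and $T[i,j]$ read off as a shortest-path distance from the sources $(0,j')$), which is offered to explain how the cyclic within-row dependency is handled computationally rather than as a formal proof. Your proposal is a correct self-contained argument, and your resolution of the cyclic dependency via a min-plus fixed-point\,/\,shortest-path view is exactly in the spirit of the paper's explanatory paragraph.

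One small technical slip: you justify uniqueness by saying the underlying graph ``has no negative cycles, hence admits a unique solution.'' Absence of negative cycles only guarantees that shortest-path values are finite; the Bellman system can still have multiple fixed points if there is a zero-weight cycle. What you actually need (and what holds here) is that the only cycle within a row has strictly positive total weight, namely $p\ge 1$, because every within-row arc $(i+1,j)\to(i+1,j\oplus 1)$ has weight $1$. With that observation your saturation argument---one full sweep around $\mathbb{Z}_p$ stabilises the row---goes through, and uniqueness follows.
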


\begin{figure}[t]
\centering
\includegraphics[width=.45\textwidth]{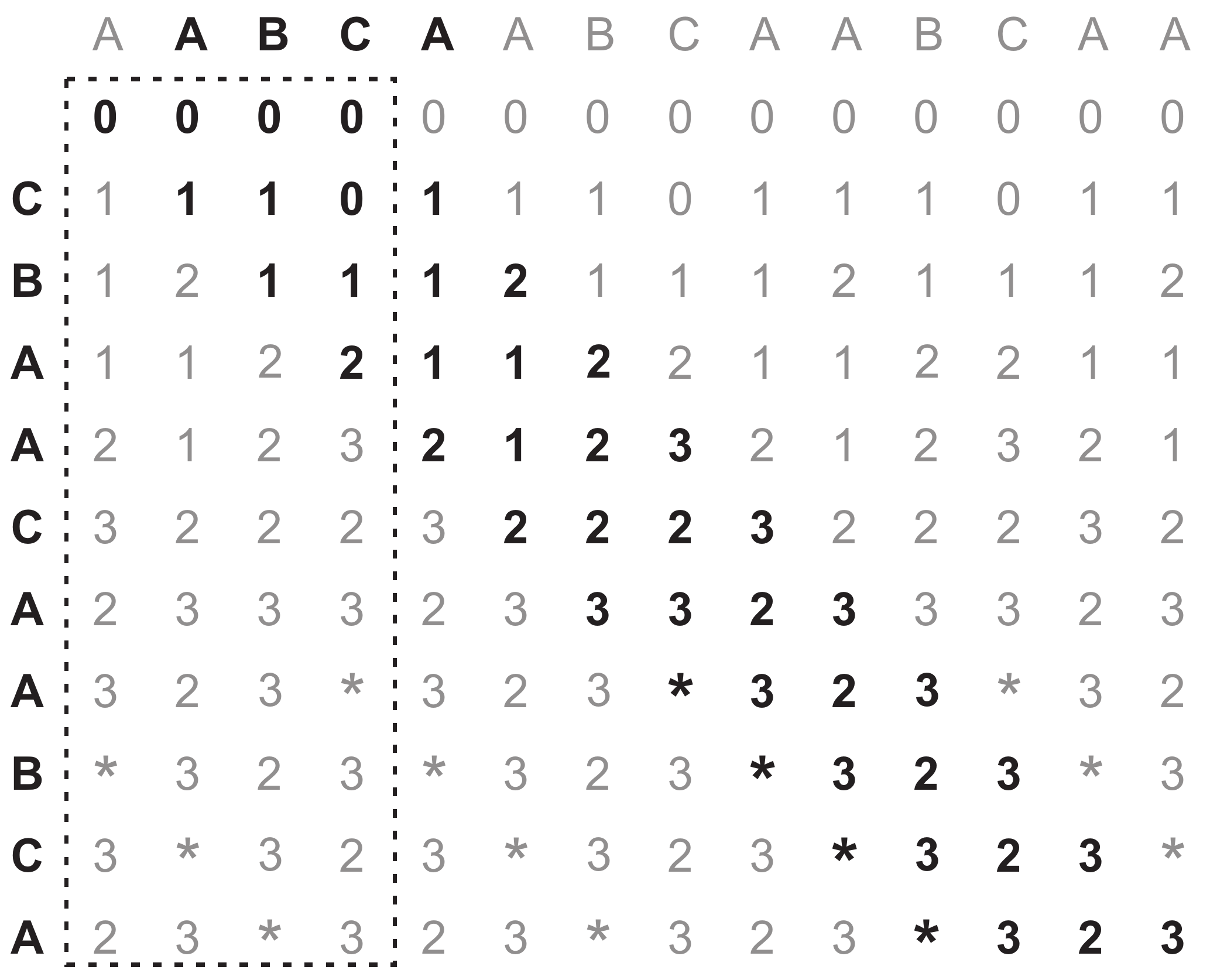}
\caption{The first four columns show the table $T$ for $S=\mathtt{CBAACAABCA}$ and $P=\mathtt{ABCA}$. The asterisks represent values that are greater than $k=3$;
these values need not be computed in our algorithm. The next columns contain copies of $T$; the highlighted diagonals show the computation of the array $D$ (see below).
  Note that $T[3,1]=1$ because $T[2,0] = 1$ and $S[2]=\texttt{A}=P[0]$.
  }\label{fig:sroda}
\end{figure}

Let us mention that the above formula contains cyclic dependencies that emerge due to wrapping (the third value in the minimum).
Nevertheless, the table can be computed using a graph-theoretic interpretation.
With each $T[i,j]$ we associate a vertex $(i,j)$.
The arcs are implied by the formula in \cref{lem:Fis}:
the arcs pointing to $(i+1,j\oplus 1)$ are from $(i,j\oplus 1)$ with weight 1 (deletion), from $(i,j)$ with weight 0 or 1 (match or substitution), and from $(i+1,j)$ with weight 1 (insertion).
Then $T[i, j]$ is the length of the shortest path from any vertex $(0,j')$ to the vertex $(i,j)$.
With this interpretation, the table $T$ is computed using Breadth-First Search, with the 0-arcs
processed before the 1-arcs.

\subsection{Wrap-Around DP with Kangaroo Jumps}
Our next goal is to compute all the values $T[n,j]$ not exceeding $k$.
In the algorithm, we exploit two properties that our dynamic
programming array has. 

First of all, let us consider a diagonal modulo length of the period,
that is, cells of the form $T[i, j \oplus i]$ for a fixed $j\in \integ_p$. We can notice that
the sequence of values on every diagonal is non-decreasing. This stems from the fact that on each diagonal the alignment of 
the pattern is the same and extending a prefix of $S$ and a subword of $P^\infty$ by one letter does not decrease their edit distance.
This results in a conclusion that if we would like to
iteratively compute the set of reachable cells within a fixed distance, then we can
convey this information with just the indices of the furthermost reachable
cells in each of the diagonals.
Our task is to check whether we can reach some cell in the last row
within the distance $k$. To achieve this, we can iteratively find the set of cells reachable
within subsequent distances $0,1,\ldots$. 
More formally, for $d \in \{0,\ldots,k\}$ and $j\in \mathbb{Z}_p$, we define
\[D[d,j]=\max\{i :  T[i,j\oplus i]\le d\};\]
see \cref{fig:sroda}.

Secondly, we observe that it is cheap to check 
how many consecutive cost-0 transitions can be made from a given cell.
Let us remind ourselves that our only free transition checks whether
the next letters of the pattern and the periodic word are equal. To know how far
we can stack this transition is, in other words, finding the longest common prefix of
appropriate suffixes of $S$ and $P^\infty$.
We obtain the following recursive formulae for $D[d,j]$; see Fig.~\ref{fig:d}.

\begin{figure}[b!]
\centering
\begin{tikzpicture}[scale=0.4,square/.style={minimum width=1.05cm, inner sep=-50, regular polygon,regular polygon sides=4}]
  \begin{scope}[xshift=-2cm]
    \draw (0,1) node[above] {$j \ominus 1$};
    \draw (0,0) node[draw, square](l0) {0};
    \draw (10,-10) node[draw,square](ld) {$d$};
    \draw (l0) -- (ld);
   
    \draw[densely dotted] (-3.2,-10) -- (ld);
    \draw (-8,-10) node[right] {$D[d,j \ominus 1]$};
    \draw (12,-12) node[draw, square](ldd) {$d\!+\!1$};
    \draw[thick,-latex] (ld) -- (12,-10);
  \end{scope}
  \begin{scope}[xshift=0cm]
    \draw (0,0) node[draw, square](m0) {0};
    \draw (0,1) node[above] {$j$};
    \draw (4,-4) node[draw, square](md) {$d$};
    \draw (m0) -- (md);
    \draw[densely dotted] (-5.2,-4) -- (3,-4);
    \draw (-5.2,-4) node[left] {$D[d,j]$};
    \draw (6,-6) node[draw, square](mdd) {$d\!+\!1$};
    \draw[thick,-latex] (mdd) -- (8,-8);
    \draw[densely dashed] (8,-8) -- (14,-14);

    \draw (18,-18) node[draw, square](mddd) {$d\!+\!1$};
    \draw[thick,-latex] (14.1,-14.1) -- (mddd);
    \draw[densely dotted] (-5.2,-18) -- (mddd);
    \draw (-5.2,-18) node[left] {$D[d+1,j]$};
      \draw[densely dotted] (-5.2,-14) -- (14,-14);
    \draw (-5.2,-14) node[left] {$i$};
  \end{scope}
  \begin{scope}[xshift=2cm]
    \draw (0,0) node[draw, square](r0) {0};
    \draw (0,1) node[above] {$j \oplus 1$};
    \draw (12,-12) node[draw, square](rd) {$d$};
    \draw (r0) -- (rd);
   
    \draw[densely dotted] (-7.2,-12) -- (6.6,-12) (9.4,-12) -- (rd);
    \draw (-7.2,-12) node[left] {$D[d,j \oplus 1]$};
    \draw (14,-14) node[draw, square](rdd) {$d\!+\!1$};
    \draw[thick,-latex] (rd) -- (12,-14);
  \end{scope}
\end{tikzpicture}
\caption{Illustration of definition and computation of the array $D$.}\label{fig:d}
\end{figure}

\begin{fact}
The table $D$ can be computed using the following formula:
 \begin{align*}
 D[0,j] &=\lcp(S, P^\infty[j\dd]),\\
 D[d+1,j] & = i + \lcp(S[i\dd ],P^\infty[i\oplus j\dd]),
 \end{align*}
 where $i = \min(n,\,\max\{D[d,j]+1,\,D[d,j \ominus 1],\,D[d,j \oplus 1]+1\})$.
\end{fact}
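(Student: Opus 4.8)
The plan is to verify both stated equalities directly against the definition $D[d,j]=\max\{i : T[i,j\oplus i]\le d\}$, using the monotonicity of $T$ along each diagonal (established just above) together with the graph interpretation of \cref{lem:Fis}. For the base case I would unfold the condition $T[i,j\oplus i]=0$. By definition of $T$ this holds exactly when $S[0\dd i-1]$ equals a length-$i$ subword $P^\infty[i'\dd j']$ with $j'\equiv (j\oplus i)-1\pmod p$; such a subword has $i'=j'-i+1\equiv j\pmod p$, so by periodicity it equals $P^\infty[j\dd j+i-1]$. Hence $T[i,j\oplus i]=0$ iff $S[0\dd i-1]$ is a prefix of $P^\infty[j\dd]$, and taking the largest such $i$ gives $D[0,j]=\lcp(S,P^\infty[j\dd])$.

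For the inductive step, the key observation is how the three incoming arcs of \cref{lem:Fis} act on the diagonal index $\mathrm{col}\ominus\mathrm{row}$ of a target cell on diagonal $j$: the match/substitution arc stays on diagonal $j$ and raises the row by $1$; the deletion (weight-$1$) arc comes from diagonal $j\oplus 1$ and raises the row by $1$; and the insertion (weight-$1$) arc comes from diagonal $j\ominus 1$ and keeps the row fixed. Reading this off the arcs entering diagonal $j$ from cells of distance at most $d$ yields precisely the three candidate rows $D[d,j]+1$, $D[d,j\oplus 1]+1$, and $D[d,j\ominus 1]$; their maximum, capped at $n$ since rows cannot exceed $n$, is the value $i$ in the statement.

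The heart of the proof is then a Landau--Vishkin-style decomposition carried out in both directions. For ``$\ge$'', each candidate cell is reachable at distance $\le d+1$ by the corresponding weight-$1$ arc, after which following the maximal run of free (match) arcs along diagonal $j$ keeps the distance $\le d+1$ and advances the row by $\lcp(S[i\dd],P^\infty[(i\oplus j)\dd])$, giving $D[d+1,j]\ge i+\lcp(S[i\dd],P^\infty[(i\oplus j)\dd])$. For ``$\le$'', I would take any row $r$ with $T[r,j\oplus r]\le d+1$, fix a shortest path to $(r,j\oplus r)$, and split it at its last weight-$1$ arc; the cell $(r_0,j\oplus r_0)$ reached immediately after this arc comes via a weight-$1$ arc from a distance-$\le d$ cell, which through the three source diagonals forces $r_0\le\max\{D[d,j]+1,D[d,j\ominus 1],D[d,j\oplus 1]+1\}$ and hence $r_0\le i$. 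Since the remaining suffix of the path consists of matches on diagonal $j$, the fragment $S[r_0\dd r-1]$ matches the corresponding fragment of $P^\infty$; combined with $r_0\le i$ this yields $r-i\le \lcp(S[i\dd],P^\infty[(i\oplus j)\dd])$, i.e.\ $r\le i+\lcp(S[i\dd],P^\infty[(i\oplus j)\dd])$. The degenerate case of an all-zero path reduces to the base case and is dominated by the candidate $D[d,j]+1$. The main obstacle I anticipate is exactly this second direction: one must argue cleanly that inspecting only the last weight-$1$ arc suffices and that starting the free extension from the single maximal row $i$ (rather than from each candidate separately) loses nothing, which is precisely where the monotonicity of $T$ along diagonals is used.
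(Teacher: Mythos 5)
Your proof is correct and takes essentially the same route as the paper's: both rely on the graph interpretation of $T$ from \cref{lem:Fis}, identify the runs of weight-$0$ arcs along a diagonal with $\lcp$-values, and decompose any distance-$(d+1)$ path into a distance-$d$ vertex, a single weight-$1$ arc (classified by its source diagonal $j$, $j\ominus 1$, or $j\oplus 1$), and a trailing run of $0$-arcs. Your explicit two-sided argument splitting a shortest path at its last weight-$1$ arc is just a more rigorous rendering of the paper's brief remark that one may ``greedily postpone'' diagonal-switching arcs and start the free extension only from the bottom-most distance-$d$ cell of each diagonal.
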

\begin{proof}
We will prove the fact by considering the interpretation of $T[i,j]$ as distances in a
weighted graph (see~\cref{sec:wrap}). 
By \cref{lem:Fis}, from every vertex $(i,j)$
we have the following outgoing arcs:
\begin{itemize}
\item $(i,j) \xrightarrow{1} (i + 1, j)$,
\item $(i,j) \xrightarrow{[S[i] \neq P[j]]} (i + 1, j \oplus 1)$,
\item $(i,j) \xrightarrow{1} (i, j \oplus 1)$.
\end{itemize}
Moreover, the value $T[i,j]$ is equal to the minimum distance to $(i,j)$ from
some vertex $(0,j')$.
The only arc of cost 0 is 
$(i,j) \xrightarrow{0} (i+1, j \oplus 1)$
when $S[i] = P[j]$. Therefore, when we have reached a vertex $(i,j)$, the
only vertices we can reach from it by using only 0-arcs are $(i, j),
{(i + 
1, j \oplus 1)}, \ldots, {(i + k, j \oplus k)}$, where $k$ is the
maximum number such that 
$S[i] = P[j]$,
$S[i + 1] = P[j \oplus 1]$,
\ldots,
${S[i + (k-1)] = P[j \oplus (k-1)]}$.
Therefore, $k = \lcp(S[i\dd], P^\infty[j\dd])$.

Hence, $D[0,j] = \lcp(S, P^\infty[j\dd])$ holds for distance $0$.
Taking advantage of monotonicity of distances on each diagonal, we know that full
information about reachable vertices at distance $d$ can be stored as a list of 
the furthest points on each diagonal. Moreover, to reach a vertex of distance $d + 1$
we need to pick a vertex of distance $d$, follow a single 1-arc and then zero or more 0-arcs. 
Combining this with the fact that arcs changing diagonal can be arbitrarily
used at any vertex, it suffices to consider only
the bottom-most point of each diagonal with the distance $d$ as the starting point of the 1-arc, as
we can greedily postpone following an arc that switches diagonals.
\end{proof}

To conclude, assuming we know the indices of furthest reachable cells 
in each of the diagonals for an edit distance $d$, we can easily compute
indices for the next distance. In the beginning, we update the indices by applying
available 1-arcs and afterwards, we increase indices by the results of
appropriate $\lcp$-queries. In the end, we have computed the furthest
reachable cells in each of the diagonals within distance $d + 1$ and achieved
that in linear time with respect to the number of diagonals, i.e., in $\Oh(p)$ time.
This approach is shown as \cref{alg:main}.

\begin{algorithm}[h]
\caption{Compute all values $T[n,j]$ not exceeding $k$}\label{alg:main}
$T[n, 0 \dd p - 1] := (\bot,\ldots,\bot)$\;
\For{$d:=0$ \KwSty{to} $k$}{
    \ForEach{$j\in \mathbb{Z}_p$}{
        \If{$d = 0$}{$i := 0$
        }
        \Else{$i := \min(n,\max(D[d-1, j]+1, D[d-1, j\ominus 1], D[d-1, j\oplus 1]+1))$
        }
        $D[d, j] := i+\lcp(S[i\dd], P^\infty[i\oplus j \dd])$\;
        \If{$D[d,j] = n$ \KwSty{and} $T[n, j \oplus n] = \bot $}{ 
          $T[n, j \oplus n] := d$\;
        }
    }
}
\end{algorithm}

\begin{lemma}
\cref{alg:main} for each $j\in \integ_p$ computes $T[n,j]$ or reports that $T[n,j]>k$.
It can be implemented in $\Oh(n+pk)$ time.
\end{lemma}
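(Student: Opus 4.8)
The plan is to prove the lemma in two parts: correctness of \cref{alg:main} and its running-time bound. For correctness, I would argue that the algorithm faithfully implements the recursive formula for $D$ established in the preceding \textbf{Fact}, and then connect the entries $D[d,j]$ back to the values $T[n,j]$ we actually want. The key invariant is that after the iteration for a given $d$, the array entry $D[d,j]$ equals $\max\{i : T[i,j\oplus i]\le d\}$, i.e. the bottom-most reachable cell on diagonal $j$ within edit distance $d$. I would establish this by induction on $d$: the base case $d=0$ matches $D[0,j]=\lcp(S,P^\infty[j\dd])$, and the inductive step is precisely the formula $D[d,j]=i+\lcp(S[i\dd],P^\infty[i\oplus j\dd])$ with $i=\min(n,\max\{D[d-1,j]+1,D[d-1,j\ominus 1],D[d-1,j\oplus 1]+1\})$, which the algorithm computes verbatim.

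\emph{Recovering $T[n,j]$.} Once the invariant for $D$ is in hand, I would show that the test inside the loop correctly records $T[n,j']$. By monotonicity of values along each diagonal, $T[i,j\oplus i]\le d$ holds for all $i\le D[d,j]$; hence the last row is reached on diagonal $j$ at distance $d$ exactly when $D[d,j]=n$, and the corresponding cell in the last row is $(n, j\oplus n)$. Therefore $T[n,j\oplus n]$ equals the smallest $d$ for which $D[d,j]=n$. The guard \texttt{if $D[d,j]=n$ and $T[n,j\oplus n]=\bot$} assigns exactly this smallest $d$, since once set the entry is never overwritten. For any $j'$ whose entry remains $\bot$ after all $k+1$ iterations, no diagonal reaches row $n$ within distance $k$, so $T[n,j']>k$, which is precisely the reported outcome.

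\emph{Running time.} The preprocessing of \cref{fct:ver} takes $\Oh(n)$ time and thereafter each $\lcp$-query costs $\Oh(1)$; the only subtlety is that we query suffixes of $P^\infty$ rather than of a single text, but since every relevant $\lcp$ value is capped by $n$ we may precompute $\lcp$ over the word $S\#P^{\lceil n/p\rceil+1}$ (or equivalently reduce each query on $P^\infty[i\oplus j\dd]$ to a bounded-length query on a linear-size concatenation), keeping the preprocessing at $\Oh(n)$. The main loop runs over $d\in\{0,\ldots,k\}$ and $j\in\integ_p$, performing $\Oh(1)$ work per cell, for a total of $\Oh(pk)$; adding the $\Oh(n)$ preprocessing yields $\Oh(n+pk)$.

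\emph{Main obstacle.} I expect the principal subtlety to be the treatment of the $\lcp$-queries against $P^\infty$: one must ensure that constant-time longest-common-prefix queries on (conceptually infinite) suffixes of the periodic word are supported by an $\Oh(n)$-size structure. The clean resolution is to observe that $T[n,j]\le k<\tau_p$ forces every reachable cell to satisfy $i\le n$, so each query is effectively truncated to length at most $n$ and can be answered on a fixed linear-size string that contains both $S$ and $\Oh(n/p)$ copies of $P$, to which \cref{fct:ver} applies directly.
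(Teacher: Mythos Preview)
Your proposal is correct and follows the same approach as the paper: correctness is inherited from the preceding Fact establishing the recurrence for $D$, and the running-time bound comes from building an $\lcp$ data structure over $S\#P^{r}$ with $r$ large enough that $|P^r|\ge n+p$. The paper's own proof is in fact much terser---it only spells out the $\lcp$ implementation detail and leaves the correctness discussion implicit---so your write-up simply fills in details the authors took for granted; the one extraneous remark is the appeal to $k<\tau_p$, which is unnecessary since $i\le n$ holds automatically because $S$ has length $n$.
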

\begin{proof}
We use \cref{fct:ver} to answer each $\lcp$-query in constant time,
by creating a data structure for $\lcp$ queries for the word $S\#P^{r }$ where $\#$ is a sentinel character and $r$ is an exponent large enough
so that $|P^r|\ge n+p$.
\end{proof}

\subsection{Main Results}
The table $T$ specifies the last position of an approximate match within the period of the periodic word.
However, in our problem we need to know the starting position, which determines the sought cyclic shift of the period.
Thus, let $T^R$ be the counterpart of $T$ defined for the reverse words $S^R$ and $P^R$. 
Its last row satisfies the following property:
\begin{fact}\label{fct:tr}
For every $j\in \mathbb{Z}_p$, we have 
\[T^R[n,p\ominus j]=\dist(S,U^\infty) \quad\text{where}\quad U=P[j\dd p-1]\cdot P[0\dd j-1].\]
Here, $U$ a cyclic shift of $P$ with the leading $j$ characters moved to the back.
\end{fact}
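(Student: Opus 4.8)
The plan is to establish the claim by relating the reverse table $T^R$ to the original quantity $\dist$ through the symmetry of edit distance under reversal, using \cref{fct:obs} as the conceptual anchor. First I would recall that $T^R$ is defined exactly as $T$ but with $S$ replaced by $S^R$ and $P$ replaced by $P^R$. By the definition of $T$ (applied to the reversed words), the cell $T^R[n, \ell]$ equals the minimum edit distance between $S^R[0\dd n-1] = S^R$ and a subword of $(P^R)^\infty$ that \emph{ends} at the position $\equiv \ell - 1 \pmod p$ within the period $P^R$.

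\medskip

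\noindent
The key observation is the following reversal identity: edit distance is preserved under reversing both arguments, i.e.\ $\ed(X, Y) = \ed(X^R, Y^R)$, since any optimal sequence of edit operations transforming $X$ into $Y$ reverses to an equally long sequence transforming $X^R$ into $Y^R$. Moreover, reversing a subword of $(P^R)^\infty$ yields a subword of $P^\infty$, and a subword of $(P^R)^\infty$ that ends at offset $\equiv \ell-1 \pmod p$ within the period $P^R$ corresponds, after reversal, to a subword of $P^\infty$ that \emph{begins} at a specific offset within the period $P$. Concretely, the $(\ell-1)$th character of $P^R$ is $P[p-\ell]$, so ending at that offset in the reversed world translates to starting at offset $p-\ell$ in the original world. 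Thus I would carefully track this index arithmetic to show that $T^R[n, p \ominus j]$ equals the minimum edit distance between $S$ and a subword of $P^\infty$ whose first character aligns with $P[j]$, i.e.\ a subword of $(P[j\dd p-1]\cdot P[0\dd j-1])^\infty$ starting at its beginning.

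\medskip

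\noindent
Finally, I would invoke the correspondence between such subwords and prefixes used in the proof of \cref{fct:obs}: the set of prefixes of $U^\infty$ for a fixed cyclic shift $U$ coincides with the set of subwords of $P^\infty$ that start at the appropriate rotation boundary. Setting $U = P[j\dd p-1]\cdot P[0\dd j-1]$, this gives $T^R[n, p\ominus j] = \min\{\ed(S, U^\infty[0\dd m]) : m \in \nat\} = \dist(S, U^\infty)$, which is exactly the claim.

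\medskip

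\noindent
I expect the main obstacle to be the index bookkeeping modulo $p$: pinning down precisely which offset in $P^R$ corresponds to a chosen starting rotation of $P$, and confirming that the constraint ``ending at position $\equiv (p\ominus j)-1$ in $P^R$'' reverses exactly to ``starting at position $j$ in $P$'', so that the cyclic shift $U$ comes out with the stated split $P[j\dd p-1]\cdot P[0\dd j-1]$ rather than an off-by-one variant. The conceptual content (reversal invariance of edit distance plus the subword/prefix identification) is routine; the care lies in getting the modular indices aligned so that $p\ominus j$ and $j$ match up correctly.
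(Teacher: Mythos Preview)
Your proposal is correct and follows essentially the same approach as the paper: both arguments unwind the definition of $T^R$, use the reversal invariance $\ed(X,Y)=\ed(X^R,Y^R)$ to turn ``ends at offset $\ell-1$ in $P^R$'' into ``starts at offset $p-\ell$ in $P$'', and then identify the resulting family of subwords with prefixes of $U^\infty$ for the appropriate cyclic shift $U$. The only cosmetic difference is that the paper first computes $T^R[n,j]$ in closed form and then substitutes $j\mapsto p\ominus j$, whereas you plug in $\ell=p\ominus j$ from the start; the index tracking and the conclusion are the same.
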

\begin{proof}
By definition of $T^R$ and $T$, for $0\le i \le n$ and $j\in \integ_p$, we have
\begin{align*}
T^R[n,j] &= \min\{\ed(S^R, (P^R)^\infty[i'\dd j'])\,:\,i' \in \nat,\, j'\equiv j-1\ (\bmod\,p)\}\\
&=\min \{\ed(S, P^\infty[j'\dd i'])\,:\,i' \in \nat,\, j'\equiv -j\ (\bmod\,p)\}\\
& =\min \{\ed(S, P^\infty[p\ominus j\dd i'])\,:\,i' \in \nat\}\\
& =\dist(S, P^\infty[p\ominus j\dd]).
\end{align*}
Consequently, 
\[T^R[n,p\ominus j]=\dist(S,P^\infty[j\dd])=\dist(S, (P[j\dd p-1]\cdot P[0\dd j-1])^\infty)\]
holds as claimed.
\end{proof}

\begin{example}
If $P=\mathtt{ABCA}$ and $S=\mathtt{CBAACAABCA}$ (see~\cref{fig:sroda}), 
then $T^R[10,2] = \dist(\mathtt{CBAACAABCA}, (\mathtt{CAAB})^\infty)=\ed(\mathtt{CBAACAABCA},\mathtt{CAABCAABCA})=2$.
\end{example}

Running \cref{alg:main} for the reverse input, we obtain the solution to the APM problem.

\begin{theorem}\label{thm:main}
  The Approximate Pattern Matching in Periodic Text problem can be solved in $\Oh(n+kp)$ time.
\end{theorem}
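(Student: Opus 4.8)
The plan is to reduce the APM problem to a single run of \cref{alg:main} on the reversed input and then relabel the output. The key observation, already packaged in \cref{fct:tr}, is that the last row of the reverse table $T^R$ encodes exactly the quantities the APM problem asks for: for each $j \in \integ_p$, the entry $T^R[n, p \ominus j]$ equals $\dist(S, U^\infty)$ for the cyclic shift $U = P[j \dd p-1] \cdot P[0 \dd j-1]$. As $j$ ranges over $\integ_p$, the word $U$ ranges over all cyclic shifts of $P$, and this correspondence is a bijection. Hence computing the last row of $T^R$ up to the threshold $k$ already solves the problem; only the indexing needs to be reconciled.

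First I would form the reversed words $S^R$ and $P^R$ and invoke \cref{alg:main} with threshold $k$ on this input. By the correctness and complexity claims of the lemma accompanying the algorithm, for every $j' \in \integ_p$ this returns $T^R[n, j']$ whenever $T^R[n, j'] \le k$ and reports $T^R[n, j'] > k$ otherwise; moreover it runs in $\Oh(n + kp)$ time, relying on the $\lcp$ data structure of \cref{fct:ver} built once on a word of the form $S^R \# (P^R)^r$ for a sufficiently large exponent $r$.

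Finally I would translate the output into the format required by the APM problem. For each cyclic shift $U = P[j \dd p-1] \cdot P[0 \dd j-1]$, i.e.\ for each $j \in \integ_p$, I read the entry at index $p \ominus j$ in the computed last row: if it is at most $k$, this value is $\dist(S, U^\infty)$ by \cref{fct:tr}; otherwise I report that $\dist(S, U^\infty) > k$. This relabeling is a single $\Oh(p)$ pass, so the overall running time stays $\Oh(n + kp)$, matching the bound.

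The step demanding the most care is the index bookkeeping: I must verify that the map $j \mapsto p \ominus j$ matches each cyclic shift to the correct column of $T^R$ and that the ``greater than $k$'' reports are preserved under this relabeling, so that no answer is silently dropped or double-counted. Everything else is a direct appeal to \cref{fct:tr} together with the already-established correctness and $\Oh(n + kp)$ running time of \cref{alg:main}.
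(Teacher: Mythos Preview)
Your proposal is correct and follows essentially the same approach as the paper: run \cref{alg:main} on the reversed input $S^R,P^R$ with threshold $k$, then invoke \cref{fct:tr} to read off $\dist(S,U^\infty)$ for each cyclic shift $U$ via the reindexing $j\mapsto p\ominus j$. The paper states this in a single sentence, whereas you spell out the relabeling pass and the carry-over of the ``$>k$'' reports, but the argument is the same.
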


By combining \cref{fct:red} and \cref{thm:main} with $k < \tau_p$, we arrive at an improved solution to the APR problem.

\begin{theorem}\label{thm:main2}
  The Approximate Period Recovery problem can be solved in $\Oh(n \log n)$ time.
\end{theorem}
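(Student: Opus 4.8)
The plan is to combine the candidate-reduction step with the new APM solver and carefully account for the total running time across all candidates. First I would invoke \cref{fct:red} to obtain, in $\Oh(n)$ time, a collection of $\Oh(\log n)$ candidate subwords of $S$, of exponentially increasing lengths, such that every approximate word-period is a cyclic shift of one of them. For each candidate $P$ of length $p$, every approximate word-period arising from it satisfies $\ed(S,W) < \tau_p$ for a prefix $W$ of $P^\infty$, so it suffices to solve the APM problem for this $P$ with threshold $k := \tau_p - 1 < \tau_p$, and then examine the values $\dist(S,U^\infty)$ returned for each cyclic shift $U$ of $P$, reporting exactly those $U$ (after checking primitivity) for which the computed value is below $\tau_p$.

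Next I would bound the cost of a single APM call. By \cref{thm:main}, solving the APM problem for a candidate of length $p$ with threshold $k < \tau_p$ takes $\Oh(n + k p)$ time. Substituting the threshold bound $\tau_p = \lfloor \tfrac{n}{(3.75+\epsilon)\cdot p}\rfloor \le \tfrac{n}{(3.75+\epsilon)\cdot p}$ gives
\[
k p < \tau_p \cdot p \le \frac{n}{(3.75+\epsilon)\cdot p}\cdot p = \frac{n}{3.75+\epsilon} = \Oh(n),
\]
so each individual APM instance runs in $\Oh(n)$ time, independently of the candidate length $p$. This uniform $\Oh(n)$ bound per candidate is the crux: the threshold shrinks in inverse proportion to the period length, which exactly cancels the factor of $p$ in the $\Oh(kp)$ term of \cref{thm:main}.

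Summing over the $\Oh(\log n)$ candidates then yields a total running time of $\Oh(n \log n)$ for all the APM calls, which dominates the $\Oh(n)$ reduction cost of \cref{fct:red}. The final post-processing — collecting the reported cyclic shifts, discarding non-primitive ones, and assembling the output — fits within $\Oh(n \log n)$ as well, since there are $\Oh(n)$ candidate shifts per period length and each check takes constant time with the data structure of \cref{fct:ver}. The main obstacle I anticipate is not the arithmetic but a subtle correctness point: I must argue that restricting to cyclic shifts of the $\Oh(\log n)$ candidates (as guaranteed by \cref{fct:red}) loses no approximate word-period, and that solving the APM problem for each candidate $P$ indeed recovers, via \cref{fct:tr}, the correct value $\dist(S,U^\infty)$ together with the starting position that identifies each cyclic shift $U$. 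Once these correspondences are in place, the time analysis is immediate.
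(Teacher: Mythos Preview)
Your proposal is correct and follows essentially the same approach as the paper: invoke \cref{fct:red} to obtain $\Oh(\log n)$ candidates, apply \cref{thm:main} with $k<\tau_p$ to each, and observe that $kp<\tau_p\cdot p=\Oh(n)$ so each APM instance costs $\Oh(n)$, giving $\Oh(n\log n)$ total. The paper's proof is a single sentence stating exactly this combination; your version simply spells out the arithmetic and the post-processing explicitly.
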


\bibliographystyle{plainurl}
\bibliography{edit_distance}

\end{document}